\begin{document}

\title{Resolution in Linguistic First Order Logic based on Linear Symmetrical Hedge Algebra}
\author{Thi-Minh-Tam Nguyen\inst{1} 
			\and Viet-Trung Vu\inst{2} \and\\
			 The-Vinh Doan\inst{2}
			\and  Duc-Khanh Tran\inst{3}}
			
\institute{Vinh University\\
			\email{nmtam@vinhuni.edu.vn}
			\and 	Hanoi University of Science and Technology\\
		\email{
				trungvv91@gmail.com, 
				doanthevinh1991@gmail.com}
			\and Vietnamese German University\\
				\email{khanh.td@vgu.edu.vn} } 

\maketitle

\begin{abstract}

This paper focuses on resolution in linguistic first order logic with truth value taken from linear symmetrical hedge algebra. We build the basic components of linguistic first order logic, including syntax and semantics. We present a resolution principle for our logic to resolve on two clauses having contradictory linguistic truth values. Since linguistic information is uncertain, inference in our linguistic logic is approximate. Therefore, we introduce the concept of reliability in order to capture the natural approximation of the resolution inference rule. 

\keywords{Linear Symmetrical Hedge Algebra; Linguistic Truth Value; Linguistic First Order Logic; Resolution; Automated Reasoning.}
\end{abstract}

\section{Introduction}

Automated reasoning theory based on resolution rule of Robinson \cite{Robinson65} has been research extensively in order to find efficient proof systems \cite{Chang1997,Klement}. However, it is difficult to design intelligent systems based on traditional logic while most of the information we have about the real world is uncertain. Along with the development of fuzzy logic, non-classical logics became formal tools in computer science and artificial intelligence. Since then, resolution based on non-classical logic (especially multi-valued logic and fuzzy logic) has drawn the attention of many researchers.

In 1965, Zadeh introduced fuzzy set theory known as an extension of set theory and applied widely in fuzzy logic \cite{Zadeh65}. Many researchers have presented works about the fuzzy resolution in fuzzy logic \cite{Ebrahim01,Lee1972,Mondal,Shen89,Vojtas01,WeigertTL93}. In 1990, Ho and Wechler proposed an approach to linguistic logic based on the structure of natural language \cite{Wechler}. The authors introduced a new algebraic structure, called hedge algebra, to model linguistic truth value domain, which applied directly to semantics value in inference. There also have been many works about inference on linguistic truth value domain based on extended structures of hedge algebra such as linear hedge algebra, monotony linear hedge algebra \cite{LeVH2009,Phuong_Khang,LAPhuong_GMP}. Researchers also presented truth functions of new unary connectives (hedges) from the set of truth values to handle fuzzy truth values in a natural way
 \cite{Esteva2013,Hajek2001,Vychodil2006}.
Recently, we have presented the resolution procedure in linguistic propositional logic with truth value domain taken from linear symmetrical hedge algebra \cite{LSHA}. We have constructed a linguistic logic system, in which each sentence in terms of \textit{``It is very true that Mary studies very well'}' is presented by $P^\mathsf{VeryTrue}$, where P is \textit{``Mary studies very well''}. Two clauses having converse linguistic truth values, such as $P^\mathsf{VeryTrue}$ and $P^\mathsf{MoreFalse}$, are resolved by a resolution rule. However, we cannot intervene in the structure of a proposition. For example with the knowledge base: \textit{``It is true that if a student studies hard then he will get the good marks''} and \textit{``It is very true that Peter studies hard''}, we cannot infer to find the truth value of the sentence \textit{``Peter will get the good marks''}. Linguistic first order logic overcomes this drawback of linguistic propositional logic. Furthermore, knowledge in the linguistic form maybe compared in some contexts, such as when we tell about the value of linguistic variable \textit{Truth}, we have $\mathsf{LessTrue<VeryTrue}$ or $\mathsf{MoreFalse<LessFalse}$. Therefore, linear symmetrical hedge algebra is an appropriate to model linguistic truth value domain.

As a continuation of our research works on resolution in linguistic propositional logic systems \cite{LSHA,RHA}, we study resolution in linguistic first order logic. We construct the syntax and semantics of linguistic first order logic with truth value domain taken from linear symmetrical hedge algebra. We also propose a resolution rule and a resolution procedure for our linguistic logic. Due to the uncertainty of linguistic information, each logical clause would be associated with a certain confidence value, called reliability. Therefore, inference in our logic is approximate. We shall build an inference procedure based on resolution rule with a reliability $\alpha $ which ensures that the reliabilities of conclusions are less than or equal to reliabilities of premises. 

The paper is structured as follows:  section \ref{sec:HA} introduces basic notions of linear symmetrical hedge algebras and logical connectives. Section \ref{sec:syntax-semantics} describes the syntax and semantics of our linguistic first order logic with truth value domain based on linear symmetrical hedge algebra. Section \ref{sec:resolution} proposes a resolution rule and a resolution procedure. Section \ref{sec:conclusion} concludes and draws possible future work.

\section{Linear Symmetrical Hedge Algebra}
\label{sec:HA}

We present here an appropriate mathematical structure of a linguistic domain called hedge algebra which we use to model linguistic truth domain for our linguistic logic. In this algebraic approach, values of the linguistic variable \textit{Truth} such as $\{True, MoreTrue, VeryPossibleTrue, PossibleFalse, LessFalse\}$, and so on are generated from a set of generators (primary terms) G =$\{False,True\}$ using hedges from a set $H = \{Very,More, Possible,Less,...\} $ as unary operations. There exists a natural ordering among these values, with $a \leq b$ meaning that a indicates $a$ degree of truth less than or equal to $b$, where $a < b$ iff $a \leq b$ and $a \neq b$. . For example, $True < VeryTrue$ and $False < LessFalse$. The relation $ \leq$ is called the semantically ordering relation on the term domain, denoted by $X$.

In general, $X$ is defined by an abstract algebra called hedge algebra $HA = (X, G, H, >)$ where $G$ is the set of generators and $H$ is the set of hedges. The set of values $X$ generated from $G$ and $H$ is defined as $X = \{ \delta c | c \in G, \delta \in H \}$. $\geq$ is a partial order  on X such that $a \geq b$ if $a>b$ or $a=b$ ($a,b \in X)$.

Each hedge $h \in H$ either strengthens or weakens the meaning of a term $x \in X$, this means $hx$ and $x$ are always comparable. For example, $VeryTrue > True$ but $PossibleTrue < True$. Therefore, the set H can be decomposed into two subsets: one subset $H^+$ consists of hedges which strengthen the primary term $True$ and the other, denoted by $H^-$, consists of hedges that weaken the term $True$.

Each hedge has a strengthening or weakening degree w.r.t. linguistic terms and so the sets $H^+$ and $H^-$ maybe ordered; and they then become a poset (partially ordered set). The ordering relationship
between two hedges $h$ and $k$ will induce relationship between $hx$ and $kx$ for every $x$ in $X$. For example, as $Less < More$, we have $LessPossibleTrue < MorePossibleTrue$.

Hedges are modifiers  which change the meaning of a term $x$ only a little. Therefore, if $h$ is a hedge, the meaning of term $hx$ must inherit the one of $x$. For every term x; if the meaning of hx and kx can be expressed by the ordering relationship $hx < kx$; then $\delta hx < \delta' kx$; for any strings of hedges $\delta $ and $\delta'$. For example from $PossibleFalse < LessF alse$ it follows that $V eryPossibleFalse < VeryPossibleLessFalse$.

Let $h, k$ be two hedges in the set of hedges $H$. Then $k$ is said to be \textit{positive (negative)} w.r.t. $h$ if for every $x \in X$, $hx \geq x$ implies $khx \geq hx (khx \leq hx)$ or, conversely, $hx \leq x$ implies $khx \leq hx ( khx \geq hx )$.
$h$ and $k$ are \textit{converse} if $\forall x \in X, hx \leq x$ iff $kx \geq x$, i.e. they are in the different subset. $h$ and $k$ are \textit{compatible} if $\forall x \in X, x \leq hx$ iff $x \leq kx$, i.e. they are in the same subset. $h$ modifies terms stronger or equal than $k$, denoted by $h \geq k$, if $\forall x \in X, (hx \geq kx \geq x)$ or $(hx \geq kx \geq x)$.

Given a term $u$ in $X$, the expression $h_n \ldots h_1u$ is called a representation of $x$ w.r.t. $u$ if $x=h_n \ldots h_1u$, and it is called a canonical representation of $x$ w.r.t. $u$ if $h_nh_{n-1} \ldots h_1u \neq h_{n-1} \ldots h_1u$. The notation $x_{u|j}$ denotes the suffix of length j of a representation of $x$ w.r.t. $u$. The following propositioin shows how to compare any two terms in $X$.
\begin{proposition} \cite{Wechler}
Let $x=h_nh_{n-1}\ldots h_1u$, $y=k_mk_{m-1}\ldots k_1 u$ be two canonical presentations of $x$ and $y$ w.r.t. $u \in X$, respectively. Then, there exists the largest $j \leq min(m,n)+1$ such that $\forall i < j, h_i=k_i$, and
\begin{enumerate}[i.]
\item $x = y$ iff  $m = n$ and $h_{j}x_{u|j} = k_{j}x_{u|j}$ for every $j \leq n$;
\item $x < y$ iff  $h_{j}x_{u|j} < k_{j}x_{u|j}$;
\item $x$ and $y$ are incomparable iff $h_{j}x_{u|j}$ and $k_{j}x_{u|j}$
\end{enumerate}
\end{proposition}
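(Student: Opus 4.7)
The plan is to leverage the inheritance property of hedges stated earlier in the paper: if $hx < kx$ then $\delta h x < \delta' k x$ for any strings $\delta, \delta'$ of hedges. This principle is the engine that transports relations between terms that differ in one position to relations between terms obtained by further applying hedges.

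First I would factor $x$ and $y$ through their common prefix. Since $h_i = k_i$ for all $i < j$, the subterm $x_{u|j} = h_{j-1}\cdots h_1 u$ coincides with the corresponding initial segment inside $y$, so
\begin{equation*}
x \;=\; h_n \cdots h_{j+1}\,(h_j x_{u|j}), \qquad y \;=\; k_m \cdots k_{j+1}\,(k_j x_{u|j}).
\end{equation*}
Thus $x$ and $y$ are obtained by applying two (possibly distinct) strings of hedges to two terms built on a shared inner subterm $x_{u|j}$, and the three relations (equality, strict order, incomparability) between $x$ and $y$ can be read off from the corresponding relations between $h_j x_{u|j}$ and $k_j x_{u|j}$.

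For item (ii), the forward direction is a direct instantiation of the inheritance property with $\delta = h_n\cdots h_{j+1}$ and $\delta' = k_m\cdots k_{j+1}$, giving $x < y$ from $h_j x_{u|j} < k_j x_{u|j}$. The converse is the contrapositive: if $h_j x_{u|j}$ and $k_j x_{u|j}$ were equal, incomparable, or oriented the other way, inheritance would force the same relation between $x$ and $y$, contradicting $x < y$. Item (iii) runs in parallel, using inheritance to transport incomparability upward. For item (i), I would first argue $m = n$: if, say, $n < m$, then the canonical representation hypothesis on $y$ guarantees $k_{n+1}(h_n\cdots h_1 u) \neq h_n\cdots h_1 u$, whence $y \neq x$; so $x = y$ forces $m = n$. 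Given $m = n$, componentwise equality $h_j x_{u|j} = k_j x_{u|j}$ for every $j \leq n$ then follows by peeling off matching outer hedges one by one, again invoking the inheritance principle to prevent any inequality from being hidden inside the common outer layers.

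The main obstacle will be the degenerate boundary case $j = \min(m,n)+1$, in which the hedges agree for as long as both representations exist and one string is a strict prefix of the other. In this case one of the symbols $h_j, k_j$ is not literally defined, so the items (ii) and (iii) must be reread through the canonical representation hypothesis, which ensures that the surplus hedges genuinely shift the inner term. The direction of that shift — whether the extra hedge strengthens or weakens $x_{u|j}$ — is precisely what determines the orientation of $x$ versus $y$, and this is exactly the input that the inheritance argument needs to close the proof.
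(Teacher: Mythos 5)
The paper does not prove this proposition at all: it is imported by citation from the reference attributed to Ho and Wechler, so there is no in-paper argument to compare yours against. Judged on its own, your sketch correctly identifies the heredity property ($hx < kx$ implies $\delta hx < \delta' kx$ for all hedge strings $\delta,\delta'$) as the engine, and the forward direction of item (ii) --- and its analogue for (iii) --- is indeed an immediate instantiation of it with $\delta = h_n\cdots h_{j+1}$ and $\delta' = k_m\cdots k_{j+1}$. The factorisation of $x$ and $y$ through the common prefix, and the observation that $x=y$ forces $m=n$ via canonicity of the longer representation, are also sound.

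The genuine gap is in the converse directions, which you dispatch by ``contrapositive via trichotomy.'' The heredity property as stated only transports a \emph{strict inequality} from the inner pair $h_j x_{u|j},\, k_j x_{u|j}$ outward; it says nothing when the inner pair is equal or incomparable. If $h_j x_{u|j} = k_j x_{u|j} = w$ with $h_j \neq k_j$, then $x = \delta w$ and $y = \delta' w$ are two \emph{different} hedge strings applied to the same term, and nothing you have invoked determines their relation --- so you cannot conclude that equality of the inner pair yields $x=y$, which is exactly what your exclusivity argument needs in order to rule that case out when $x<y$. Likewise, transporting incomparability outward requires the converse of heredity (that $\delta h_j w' \le \delta' k_j w'$ can only happen when $h_j w' \le k_j w'$), which is not among the properties available in this paper. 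Closing these cases requires the remaining hedge-algebra axioms from the source: the axiom controlling $hx$ versus $kx$ for distinct $h,k$ (so the inner pair is never silently equal) and the full semantic-heredity axiom comparing the classes $H(h_jx_{u|j})$ and $H(k_jx_{u|j})$ as wholes, which is what actually forces every element above one to relate uniformly to every element above the other. Your discussion of the boundary case $j=\min(m,n)+1$ names the right difficulty but leaves it unresolved, and the right-to-left direction of item (i) is not argued at all.
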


The set of primary terms $G$ usually consists of two comparable ones, denoted by $c^- < c^+$. For the variable \textit{Truth}, we have $c^+ = True > c^- = False$. Such HAs are called \textit{symmetric} ones. For symmetric HAs, the set of hedges $H$ is decomposed into two disjoint subsets $H^+$ and $H^-$ defined as $H^+=\{h \in H| hc^+ > c^+\}$ and $H^-=\{h \in H| hc^+ < c^+\}$. Two hedges in each of the sets $H^+$ and $H^-$ maybe comparable or incomparable. Thus, $H^+$ and $H^-$ become posets.

\begin{definition} \cite{Programming}
A symmetric HA $AX=(X, G = \{c^-, c^+\}, H, \leq)$ is called a \textit{linear symmetric HA} (\textit{lin-HA}, for short) if the set of hedges $H$ is devided into two subsets $H^+$ and $H^-$, where $H^+=\{h \in H| hc^+ > c^+\}$, $H^-=\{h \in H| hc^+ < c^+\}$, and $H^+$ and $H^-$ are linearly ordered.
\end{definition}
Let $x$ be an element of the hedge algebra $AX$ and the canonical representation of $x$ is $x = h_{n}...h_{1}a$ where $a \in \{c^+, c^-\}$. The contradictory element of $x$ is an element $\overline{x}$ such that $\overline{x} = h_{n}...h_{1}a'$ where $a' \in \{c^+, c^-\}$ and $a' \neq a$.. In  \textit{lin-HA}, every element $x \in X$ has an unique contradictory element in $X$.

HAs are extended by augmenting two hedges $\Phi$ and $\Sigma$ defined as $\Phi(x)=infimum(H(X))$ and $\Sigma(x)=supremum(H(x))$, for all $x \in X$ \cite{Ho1992}. An HA is said to be free if $\forall x \in X$ and $\forall h \in H, hx \neq x$. It is shown that, for a free \textit{lin-HA} of the variable \textit{Truth} with $H \neq \emptyset, \Phi(c^+)=\Sigma(c^-)$, $\Sigma(c^+)=\top$ (AbsolutelyTrue), and $\Phi(c^-)=\bot$ (AbsolutelyFalse). Let us put $W = \Phi(c^+)=\Sigma(c^-)$ (called the middle truth value), we have $\bot<c^-<W<c^+<\top$.

\begin{definition}
\label{defi:truth_domain}
A linguistic truth domain $\overline{X}$ taken from a \textit{lin-HA} $AX=(X,\\ \{c^-, c^+\},H, \leq)$ is defined as $\overline{X} = X \cup \{\bot, W, \top \}$, where $\bot, W, \top$ are the least, the neutral, and the greatest elements of $\overline{X}$, respectively.
\end{definition}

\begin{proposition}
\cite{Ho1992} For any \textit{lin-HA} $AX = (X,G,H,\leq)$, the linguistic truth domain $\overline{X}$ is linearly ordered.
\end{proposition}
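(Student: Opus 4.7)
The plan is to reduce linearity of $\overline{X}$ to two sub-claims: (i) every element of $X$ is comparable with each of the boundary elements $\bot$, $W$, $\top$; and (ii) any two elements of $X$ are comparable to each other. Sub-claim (i) is essentially by construction. Since $\bot = \Phi(c^-)$ and $\top = \Sigma(c^+)$ are by definition the least and greatest elements of $\overline{X}$, they are comparable with everything. For the middle element $W = \Phi(c^+) = \Sigma(c^-)$, I would use the fact that every $x \in X$ has a canonical representation $x = h_n \ldots h_1 a$ with $a \in \{c^+, c^-\}$, so $x \in H(a)$; the infimum/supremum characterizations of $\Phi$ and $\Sigma$ then force $W \leq x \leq \top$ when $a = c^+$ and $\bot \leq x \leq W$ when $a = c^-$. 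This not only gives comparability with $W$ but also cleanly separates $X$ into two ``sides'' of $W$.

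For (ii), I would fix $x, y \in X$ with canonical representations $x = h_n \ldots h_1 a_x$ and $y = k_m \ldots k_1 a_y$. If $a_x \neq a_y$, then by the separation obtained above $x$ and $y$ lie on opposite sides of $W$ and are therefore comparable. If $a_x = a_y = u$, I invoke the Wechler comparison criterion (Proposition~1) with common base $u$ and show that the third clause (incomparability) cannot occur under the lin-HA hypothesis. Let $j$ be the largest index such that $h_i = k_i$ for all $i < j$; it remains to show that $h_j x_{u|j}$ and $k_j x_{u|j}$ are always comparable.

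The main obstacle is precisely this last sub-step, which I plan to handle by a case split on the memberships of $h_j$ and $k_j$. If both hedges lie in the same subset among $H^+, H^-$, then they are comparable as hedges by the linear ordering assumed on $H^+$ and $H^-$, and the monotonicity property ``$h \geq k$ implies $h w \geq k w$ for every $w$'' transports this comparability to $h_j x_{u|j}$ versus $k_j x_{u|j}$. Otherwise one of $h_j, k_j$ lies in $H^+$ and the other in $H^-$; they are converse with respect to the common argument $x_{u|j}$, so one of $h_j x_{u|j}, k_j x_{u|j}$ lies above $x_{u|j}$ while the other lies below, again yielding comparability. Combining (i) and (ii), any two elements of $\overline{X}$ are comparable, and $\overline{X}$ is linearly ordered.
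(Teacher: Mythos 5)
The paper offers no proof of this proposition at all: it is imported from the cited reference (Ho 1992) as a known fact about lin-HAs, so there is no in-paper argument to compare yours against. Judged on its own, your reconstruction is sound and follows the standard route: split $\overline{X}$ at $W$ using $\Phi(c^+)=\Sigma(c^-)=W$ so that $H(c^+)$ and $H(c^-)$ lie on opposite sides of $W$ (which settles comparability across the two generators and with $\bot, W, \top$), then reduce comparability within one side to comparability of $h_j x_{u|j}$ and $k_j x_{u|j}$ via the Wechler criterion, and kill the incomparability clause by the dichotomy ``$h_j, k_j$ compatible'' (linearity of $H^+$ and of $H^-$) versus ``$h_j, k_j$ converse'' (one pushes above $x_{u|j}$, the other below, so they compare through $x_{u|j}$). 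Two small points you should tighten. First, your case split assumes both $h_j$ and $k_j$ exist, but Proposition 1 allows $j=\min(m,n)+1$, i.e.\ one canonical representation is a proper prefix of the other; there the comparison is between $x_{u|j}$ itself and $k_j x_{u|j}$, which is settled not by your two cases but by the basic axiom that $hx$ and $x$ are always comparable. Second, the hedge ordering does not give ``$h\geq k$ implies $hw\geq kw$'' literally --- it gives $hw\geq kw\geq w$ \emph{or} $hw\leq kw\leq w$ --- but since all you extract from it is comparability of $hw$ and $kw$, your conclusion is unaffected; just phrase it that way.
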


In many-valued logic, sets of connectives called {\L}ukasiewicz, G\"{o}del, and product logic ones are often used. Each of the sets has a pair of residual t-norm and implicator.  However, we cannot use the product logic connectives when our truth values are linguistic.
We showed that the logical connectives based on
G\"{o}del's t-norm and t-conorm operators are more suitable for our linguistic logic than those based on {\L}ukasiewicz's \cite{LSHA} . Therefore, in this paper we define logical connectives using G\"{o}del's t-norm and t-conorm operators \cite{Graded,WeigertTL93}. 

Let $K=\{n|n\in \mathbb{N} , n\leq N_0\}$. 
A pair of $(T,S)$ in G\"{o}del's logic is defined as follows:
\begin{itemize}
\item $T_{G}(m,n)=\min(m,n)$.
\item $S_{G}(m,n)=\max(m,n)$.
\end{itemize}
It is easy to prove that $T_G, S_G$ are commutative, associate, monotonous.

Given a \textit{lin-HA} AX, since all the values in AX are linearly ordered, truth functions for conjunctions and disjunctions are G\"{o}del's t-norms and t-conorms, respectively.

\begin{definition}
\label{logical_connections}
Let $S$ be a linguistic truth domain, which is a \textit{lin-HA} $AX=(X, G,
H,\leq)$, where $G = \{\top, \mathsf{True}, \mathsf{W},
\mathsf{False}, \bot \}$. The logical connectives $\wedge$ (respectively $\vee$) over the set $X$ are defined to be G\"{o}del's t-norm (respectively t-conorm), and furthermore to satisfy the following:
$\neg \alpha = \overline \alpha$, and 
$\alpha \rightarrow \beta = (\neg \alpha)\vee \beta$,
where $\alpha,\beta \in X$.
\end{definition}

\begin{proposition}
\label{pros:property_connection}
Let $S$ be a linguistic truth domain, which is a \textit{lin-HA} $AX=(X,\{\top, \mathsf{True}, \mathsf{W}, \mathsf{False}, \bot \}, H,\leq)$; $\alpha,\beta, \gamma \in X$, we have:
\begin{itemize}
	\item Double negation:
	$\neg (\neg \alpha) = \alpha$

	\item Commutative:
	 $\alpha \wedge \beta = \beta \wedge \alpha$, 
	$\alpha \vee \beta = \beta \vee \alpha$

	\item Associative:
	$(\alpha \wedge \beta)\wedge \gamma = \alpha\wedge(\beta\wedge\gamma)$,
	$(\alpha \vee \beta)\vee \gamma = \alpha\vee(\beta\vee\gamma)$

	\item Distributive:
 $\alpha\wedge(\beta\vee\gamma)=(\alpha\wedge\beta)\vee(\alpha\wedge\gamma)$, 
 $\alpha\vee(\beta\wedge\gamma)=(\alpha\vee\beta)\wedge(\alpha\vee\gamma)$
	
		\end{itemize}
		\end{proposition}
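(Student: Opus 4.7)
The plan is to reduce every clause of the proposition to a property of the underlying lin-HA together with standard identities for $\min$ and $\max$ on a totally ordered set. The key ambient fact is the previously stated proposition that the linguistic truth domain $\overline{X}$ is linearly ordered; once this is invoked, the truth functions $\wedge$ and $\vee$ are simply $T_G=\min$ and $S_G=\max$ on a chain, and most of the claims become identities from elementary lattice theory.

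For double negation, I would argue directly from the syntactic description of the contradictory element given just before Definition~\ref{defi:truth_domain}. Writing $\alpha$ in canonical form as $\alpha=h_n\ldots h_1 a$ with $a\in\{c^+,c^-\}$, the definition gives $\neg\alpha=\overline{\alpha}=h_n\ldots h_1 a'$ where $a'$ is the unique element of $\{c^+,c^-\}$ distinct from $a$. Applying $\neg$ once more produces $h_n\ldots h_1 a''$ with $a''\neq a'$, which forces $a''=a$ because $\{c^+,c^-\}$ has only two elements; uniqueness of the contradictory element in a lin-HA (stated in the text) then yields $\neg(\neg\alpha)=\alpha$. The boundary values $\bot,W,\top$ must be handled as separate base cases, using $\overline{\bot}=\top$, $\overline{\top}=\bot$, $\overline{W}=W$.

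Commutativity and associativity of $\wedge$ and $\vee$ are immediate: the paper has already asserted that $T_G=\min$ and $S_G=\max$ are commutative and associative, and Definition~\ref{logical_connections} equates $\wedge,\vee$ with these operators on the linearly ordered set $\overline{X}$. No separate argument is needed beyond citing this.

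The only clauses that require a short verification are the two distributive laws. I would prove $\min(\alpha,\max(\beta,\gamma))=\max(\min(\alpha,\beta),\min(\alpha,\gamma))$ by the usual case analysis on a chain: assuming without loss of generality that $\beta\leq\gamma$, split into the three subcases $\alpha\leq\beta\leq\gamma$, $\beta\leq\alpha\leq\gamma$, and $\beta\leq\gamma\leq\alpha$, and check that both sides evaluate to $\alpha$, $\alpha$, and $\gamma$ respectively. The dual identity $\max(\alpha,\min(\beta,\gamma))=\min(\max(\alpha,\beta),\max(\alpha,\gamma))$ is handled symmetrically. I do not anticipate any real obstacle: the only point one has to be careful about is to invoke the linear ordering of $\overline{X}$ rather than relying on properties of $X$ alone, since $\alpha,\beta,\gamma$ may a priori be any of $\bot,W,\top$ or ordinary terms, and it is the extension to $\overline{X}$ that guarantees the chain structure needed for the case analysis.
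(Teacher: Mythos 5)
Your proof is correct, and it follows exactly the route the paper intends: the paper states this proposition without an explicit proof, relying on the facts it has already recorded (that the truth domain is linearly ordered and that $T_G=\min$, $S_G=\max$ are commutative and associative), which is precisely what you invoke, supplemented by the standard chain case analysis for distributivity and the involutive flip of the generator for double negation. Nothing further is needed.
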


\section{Linguistic First Order Logic based on Linear Symmetrical Hedge Algebra}
\label{sec:syntax-semantics}

In this section we define the syntax and semantics of our linguistic first-order logic.
\subsection{Syntax}
\begin{definition}
The alphabet of a linguistic first-order language consists of the following sets of symbols:
\begin{itemize}
\item constant symbols: a set of symbols $a, b, c, \ldots$, each of 0-ary;
\item logical constant symbols: $\mathsf{MoreTrue}, \mathsf{VeryFalse}, \bot, \top, ...$;
\item variable: $x, y, z, \ldots$;
\item predicate symbols: a set of symbols $P, Q, R, \ldots$, each associated with a positive integer n, arity. A predicate with arity n is called n-ary;
\item function symbols: a set of symbols $f, g, h, \ldots$, each associated with a positive integer n, arity. A function with arity n is called n-ary;
\item logical connectives: $\lor, \land,  \neg, \rightarrow, \leftrightarrow$; 
\item quantifies: universal quantification $\forall$, existentional quantification $\exists$;
\item auxiliary symbols: $ \Box, (, ), \ldots$.
\end{itemize}
\end{definition}

\begin{definition}
A term is defined recursively as follows:
\begin{itemize}
\item either every constant or every variable symbol is a term,
\item if $t_{1}, \ldots, t_{n}$ are terms and $f$ is a n-ary function symbol, $f(t_{1}, \ldots, t_{n})$ is a term (functional term).
\end{itemize}
\end{definition}

\begin{definition}
An atom is either a zero-ary predicate symbol or a n-ary predicate symbol $P(t_{1}, \ldots, t_{n})$, where $t_{1}, \ldots, t_{n}$ are terms.
\end{definition}

\begin{definition}
Let A be an atom and $\alpha$ be a logical constant. Then $A^\alpha$ is called a literal to represent $A$ is $\alpha$.
\end{definition}

\begin{definition}
Formulae are defined recursively as follows:
\begin{itemize}
	\item a literal is a formula,
	\item if $F, G$ are formulae, then $F \lor G$, $F \land G$, $F\rightarrow G, F\leftrightarrow G, \neg F$  are formulae, and
	\item if $F$ is a formula and x is a free variable in $F$, then $(\forall x)F$ and $(\exists x)F$ are formulae.
\end{itemize}
\end{definition}
The notions of free variable, bound variable, substitution, unifier, most general unifier, ground formula, closed formula, etc. are similar to those of classical logic.

\begin{definition}
A clause is a finite disjunction of literals represented by $ L_{1} \lor L_{2} \lor ... \lor L_{n}$, where $L_{i} (i=1,2,...,n)$ is a literal. An empty clause is denoted by $\Box$.
\end{definition}
A formula is in conjunctive normal form (CNF) if it is a conjunction of clauses. It is well known that transforming a formula in first order logic into a CNF formula preserves satisfiability \cite{Chang1997}. In Section \ref{sec:resolution} we shall be working with a resolution procedure which processes CNF formulae, or equivalently clause sets.

\subsection{Semantics}


\begin{definition}
\label{interpretation}
An interpretation for the linguistic first order logic is a pair I=$<$D,A$>$ where $D$ is a non empty set called domain of $I$, and $A$ is a function that maps:
	\begin{itemize}
		\item every constant symbol $c$ into an element $c^A \in D$;
		
		\item every n-ary function symbol f into a function $f^A: D^n \rightarrow X$;
		
		\item every logical constant symbol $l$ into an element $l^A \in X$;
		\item every n-ary predicate symbol P into an n-ary relation $P^A: D^n \rightarrow X$, where X is the truth value domain taken from \textit{lin-HA};
		\item every variable x into a term.
	\end{itemize}
\end{definition} 
Given an interpretation \textit{I=$<$D,A$>$} for the linguistic first order logic, the truth value of a symbol $S$ in the alphabet of the logic is denoted by $I(S)$.

\begin{definition} Given an interpretation I=$<$D,A$>$, we define: 
\begin{itemize} 
\item Value of a term: $I(t) = t^A$,
$I(f(t_1, \ldots, t_n)) = f(I(t_1), \ldots, I(t_n))$.
\item Truth value of an atom: $I(P(t_1, \ldots, t_n)) = P(I(t_1), \ldots, I(t_n))$.
\item Truth value of a logical constant: $I(c) = c^A$.
\item Let $P$ be an atom such that $I(P) = \alpha_1$. Truth value of a literal $P^{\alpha_2}$:
	
	\[
I(P ^{\alpha_2})= 
	\begin{cases}
		\alpha_1 \land \alpha_2  \text{ if }  \alpha_1, \alpha_2 > \mathsf{W}, \\
		\neg (\alpha_1 \lor \alpha_2) \text{ if } \alpha_1, \alpha_2 \leq \mathsf{W}, \\ 
		(\neg \alpha_1) \lor \alpha_2, \text{ if } \alpha_1 > \mathsf{W}, \alpha_2 \leq W, \\
		\alpha_1 \lor (\neg \alpha_2), \text{ if } \alpha_1 \leq \mathsf{W}, \alpha_2 > \mathsf{W}.
	\end{cases}	
\]
\item Let F and G be formulae. Truth value of a formula:
\begin{multicols}{2}
	\begin{itemize}
	\item $I(\neg F) = \neg I(F)$
	\item $I(F \land G) = I(F) \land I(G)$
	\item $I(F \lor G) = I(F) \lor I(G)$
	\item $I(F \rightarrow G) = I(F) \rightarrow I(G)$
	\item $I(F \leftrightarrow G) = I(F) \leftrightarrow I(G)$
	\item $I((\forall x)F) = min_{\forall d \in D}\{I(F)\}$
	\item $I((\exists x)F) = max_{\exists d \in D}\{I(F)\}$
	\end{itemize}	
\end{multicols}
\end{itemize}
\end{definition}

\begin{definition}
Let I=$<$D,A$>$ be an interpretation and $F$ be a formula. Then
\begin{itemize}
\item $F$ is true iff $I(F) \geq W$. $F$ is satisfiable iff there exists an interpretation $I$ such that $F$ is true in $I$ and we say that $I$ is a model of $F$ (write $I \models F$) or $I$ satisfies $F$.
\item $F$ is false iff $I(F)<W$ and we say that $I$ falsifies $F$. $F$ is unsatisfiable iff there exists no interpretation that satisfies $F$.
\item $F$ is valid iff every interpretation of $F$ satisfies $F$.
\item A formula $G$ is a logical consequence of formulas $\{F_1, F_2, \ldots, F_n\}$ iff for every interpretation $I$, if $I \models F_1 \land F_2 \land \ldots \land F_n$ we have that $I \models G$.
\end{itemize}
\end{definition}

\begin{definition}
\label{defi:equivalent}
Two formulae F and G are logically equivalent iff $F\models G$ and $G \models F$ and we write $F \equiv G$.
\end{definition}

It is infeasible to consider all possible interpretations over all domains in order to prove the unsatisfiability of a clause set $S$. Instead, we could fix on one special domain such that $S$ is unsatisfiable iff $S$ is false under all the interpretations over this domain. Such a domain, which is called the Herbrand universe of $S$, defined as follows.

\label{label:HUniver}
Let $H_0$ be the set of all constants appearing in $S$. If no constant appears in $S$, then $H_0$ is to consist of a single constant, say $H_0 = \{a\}$. For $i=0,1,2,\ldots$, let $H_{i+1}$ be the union of $H_i$ and the set of all terms of the form $f^n(t_1,\ldots,t_n)$ for all $n$-place functions $f^n$ occurring in $S$, where $t_j$, $j=1,\ldots,n$, are members of the set $H_i$. Then each $H_i$ is called the i-level constant set of $S$ and $H_\infty$ is called the Herbrand universe (or H-universe) of $S$, denoted by $H(S)$.

\label{defi:Hbase}
The set of ground atoms of the form $P^n(t_1,\ldots,t_n)$ for all n-ary predicates $P^n$ occuriring in $S$, where $t_1,\ldots,t_n$ are elements of the H-universe of $S$, is called the atom set, or Herbrand base (H-base, for short) of $S$, denoted by $A(S)$.

A ground instance of a clause $C$ of a clause set $S$ is a clause obtained by replacing variables in $C$ by members of H-universe of $S$.

We now consider interpretations over the H-universe. In the following we define a special over the H-universe of $S$, called the H-interpretation of $S$.

\begin{definition}
\label{defi:Hinter}
Let $S$ be a clause set, $H$ be the H-universe of S, and I=$<$D,A$>$ be an interpretation of $S$.  $\mathcal{I}$ is an H-interpretation of $S$ if the following holds:
\begin{itemize}
\item $D=H$,
\item Let c be a constant symbol, $c^A = c$,
\item Let f be a n-ary function symbol, $f^A$ maps $(h_1,\ldots,h_n) \in H^n$ to $f(h_1,\ldots,h_n) \in H$
\item Let $A=\{A_1, \ldots, A_n, \ldots\}$ be the H-base (or atom set) of $S$, H-interpretation $\mathcal{I} = \{m_1, \ldots, m_n, \ldots\}$, where $m_j = A_j$ or $m_j = \neg A_j$.
\end{itemize}
\end{definition}
Given $I=<D,A>$ interpretation over $D$, an H-interpretation $\mathcal{I}=<H,\mathcal{A}>$ corresponding to $I$ is an H-interpretation that satisfies the following condition:

Let $h_1, \ldots, h_n$ be elements of $H$ and let $m:H \rightarrow D$ be a mapping from $H$ to $D$ then $P^\mathcal{A}(h_1,\ldots,h_n) = P^A(m(h_1),\ldots,m(h_n))$

Given an Interpretation $I$, we can always find a corresponding $\mathcal{I}$ H-interpretation.
\begin{lemma}
\label{lemma:Hsatis}
If an interpretation I over some domain D satisfies a clause set S, then any one of the H-interpretations $\mathcal{I}$ corresponding to I also satisfies S.
\end{lemma}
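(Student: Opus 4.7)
The plan is to reduce satisfaction of $S$ by $\mathcal{I}$ to an identity between $\mathcal{I}$- and $I$-evaluations of ground instances, and then to verify that identity by structural induction. A clause $C \in S$ with free variables $x_1,\ldots,x_k$ is satisfied iff its universal closure holds, i.e., iff every ground instance has value $\geq W$. So it suffices to prove that for every clause $C \in S$ and every substitution $\sigma : \mathrm{Var}(C) \to H$,
\[
\mathcal{I}(C\sigma) \;=\; I\bigl(C[m \circ \sigma]\bigr).
\]
Granting this identity, the hypothesis $I \models S$ applied to the $D$-assignment $m \circ \sigma$ gives $I(C[m \circ \sigma]) \geq W$, hence $\mathcal{I}(C\sigma) \geq W$; letting $\sigma$ range over all ground substitutions into $H$ yields $\mathcal{I} \models S$.

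To establish the identity I would induct in three strata. At the \emph{term} level, each ground term $t$ built from the symbols of $S$ evaluates in $\mathcal{I}$ to the syntactic term itself, because Definition \ref{defi:Hinter} forces $c^{\mathcal{A}} = c$ and $f^{\mathcal{A}}(h_1,\ldots,h_n) = f(h_1,\ldots,h_n)$. Taking $m$ to be the natural map $h \mapsto I(h)$, a straightforward induction on term structure gives $m(\mathcal{I}(t\sigma)) = I(t[m \circ \sigma])$. At the \emph{atomic} level, the corresponding-interpretation condition $P^{\mathcal{A}}(h_1,\ldots,h_n) = P^A(m(h_1),\ldots,m(h_n))$ then yields the required equality of atom values. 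Lifting from atoms to \emph{literals} $A^\alpha$ is immediate because the four-case formula for $I(A^\alpha)$ depends only on $I(A)$, on $\alpha$, and on their comparison with $W$, so equal atomic values force equal literal values. Finally, the propositional connectives $\lor, \land, \neg$ occurring in a clause are interpreted by the same linguistic operations in both $\mathcal{I}$ and $I$, so the literal-level equality propagates upward to the full disjunction $C$.

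The hard part will not be conceptual depth but careful bookkeeping. One must verify that the ``term evaluates to itself'' behaviour of the H-interpretation aligns with $I$ exactly through the chosen mapping $m$, and that the linguistic literal semantics really depends only on the pair $(I(A), \alpha)$ so that atomic agreement suffices. A subtle point worth noting is that, since clauses are quantifier-free disjunctions of literals, one never needs to compare $\min_{d \in D}$ with $\min_{h \in H}$ inside the induction; the universal closure is handled uniformly by letting $\sigma$ range over all of $H^{k}$ from the outside, and the hypothesis supplies $I(C[m \circ \sigma]) \geq W$ for each such $\sigma$ individually.
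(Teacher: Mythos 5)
Your proposal is correct and is essentially the paper's argument: both reduce satisfaction to the claim that a ground instance over the H-universe and its image under the mapping $m$ (which both you and the paper implicitly take to be $h \mapsto I(h)$) receive the same truth value, you proving this identity directly by structural induction where the paper states it contrapositively and justifies it by appeal to Definition \ref{defi:Hinter}. The only difference is presentational --- the paper assumes $\mathcal{I}$ falsifies some ground clause and pushes the falsifying instance down to $D$ to contradict $I \models S$, while you run the same correspondence in the positive direction --- so no further comparison is needed.
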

\begin{proof} 
Assume $\mathcal{I}$ falsifies $S$ over domain $D$. Then there must exist at least one clause $C$ in $S$ such that $\mathcal{I}(C)<\mathsf{W}$. 
Let $x_1, \ldots, x_n$ be the variables occurring in $C$. Then there exist $h_1,\ldots,h_n$ in $H(S)$ such that $\mathcal{I}(C')<\mathsf{W}$ where $C'$ is ground clause obtained from $C$ by replacing every $x_i$ with $h_i$.
Let every $h_i$ mapped to some $d_i$ in $D$ by $I$. By the definition of H-interpretation of $S$ in Def. \ref{defi:Hinter}, if $C''$ is the ground clause obtained from $C$ by replacing every $x_i$ with $d_i$ then $I(C'') < \mathsf{W}$. This means that $I$ falsifies $S$ which is impossible.
\end{proof}
\begin{theorem}
\label{theo:Hunsatis}
A clause set S is unsatisfiable iff S is false under all the \\H-interpretations of S.
\end{theorem}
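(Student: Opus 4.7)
The plan is to prove the two directions separately, with the forward direction being essentially immediate and the backward direction relying on the lemma just proved. I would phrase both directions contrapositively when it shortens the argument.

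For the $(\Rightarrow)$ direction, I would argue by contraposition. Suppose $S$ is true under some H-interpretation $\mathcal{I}$. Since by Definition \ref{defi:Hinter} an H-interpretation is in particular an interpretation (its domain is $H(S)$, its function and predicate symbols are interpreted as maps on $H(S)$), the existence of such $\mathcal{I}$ directly witnesses satisfiability of $S$. Hence $S$ is satisfiable, contradicting unsatisfiability. This direction is completely routine and needs no further machinery.

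For the $(\Leftarrow)$ direction, I would again contrapose: assume $S$ is satisfiable and produce an H-interpretation satisfying $S$. By satisfiability, there exists an interpretation $I=\langle D, A\rangle$ over some domain $D$ with $I \models S$. I would then invoke the construction discussed just before Lemma \ref{lemma:Hsatis}, which guarantees that corresponding to any interpretation $I$ we can always build an H-interpretation $\mathcal{I}$ of $S$ (by fixing $D=H(S)$, interpreting constants and function symbols syntactically, and defining $P^{\mathcal{A}}(h_1,\ldots,h_n):=P^{A}(m(h_1),\ldots,m(h_n))$ through the mapping $m:H(S)\to D$). Lemma \ref{lemma:Hsatis} then yields $\mathcal{I}\models S$, so $S$ is not false under every H-interpretation of $S$, completing the contrapositive.

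The main obstacle, insofar as there is one, is making sure the quantifier clauses (universal and existential) and the truth-value semantics of literals $P^{\alpha}$ interact correctly with the replacement of domain elements $d_i$ by Herbrand terms $h_i$; but this work has already been absorbed into Lemma \ref{lemma:Hsatis}, which is proven for arbitrary clauses by reducing to their ground instances. Thus the theorem becomes essentially a packaging of that lemma together with the trivial observation that H-interpretations are themselves interpretations. I would therefore keep the proof short: one sentence for the trivial direction, and a two- or three-sentence appeal to the lemma for the nontrivial direction.
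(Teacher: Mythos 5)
Your proposal is correct and follows essentially the same route as the paper: the forward direction is the trivial observation that an H-interpretation is an interpretation, and the backward direction passes from an arbitrary satisfying interpretation $I$ to a corresponding H-interpretation via Lemma \ref{lemma:Hsatis}. The only cosmetic difference is that you phrase both directions contrapositively while the paper argues the second direction by contradiction; the content is identical.
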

\begin{proof}
($\Rightarrow$) Obviously, by definition S is unsatisfiable iff S is false under all the interpretations over any domain.

($\Leftarrow$) Assume that $S$ is false under all the H-interpretations of S. Suppose S is satisfiable. Then there is an interpretation $I$ over some domain $D$ such that $I(S) \geq \mathsf{W}$. Let $\mathcal{I}$ be an H-interpretation corresponding to $I$. According to Lemm. \ref{lemma:Hsatis}, $\mathcal{I}(S) \geq \mathsf{W}$. This contradicts the assumption that $S$ is false under all the H-interpretations of $S$. Therefore, $S$ must be unsatisfiable.
\end{proof}
Let $S$ be a clause set and $A(S)$ be the H-base of $S$. A \textit{semantic tree} for $S$ is a  complete binary tree constructed as follows:
\begin{itemize}
\item For each node $N_i$ at the $i^th$ level corresponds to an element $A_i$ of $A(S)$, that is, the left edge of $N_i$ is labeled $A_i < \mathsf{W}$, the right edge of $N_i$ is labeled $A_i \geq \mathsf{W}$.

\item Conversely, each element of $A(S)$ corresponds to exactly one level in the tree, this means if $A_i \in A(S)$ appears at level $i$ then it must not be at any other levels.
\end{itemize}

%
%
%
%

Let $T$ be a semantic tree of a clause set $S$ and $N$ be a node of $T$. We denote $\mathcal{I}(N)$ to be the union of all the sets labeled to the edges of branch of $T$ down to $N$. If there exists an H-interpretation $\mathcal{I}$ in $T$ which contains $\mathcal{I}(N)$, such that $\mathcal{I}(N)$ falsifies some ground instance of $S$, then $S$ is said to be failed at the node $N$. A node $N$ is called a \textit{failure node} of $S$ iff $S$ falsifies at $N$ and $\mathcal{I}(N')$ does not falsify any ground instance of a clause in $S$ for every ancestor node $N'$ of $N$. $N$ is called an \textit{inference node} if all the immediate descendant nodes of N are failure nodes.
If every branch in $T$ contains a failure node, cutting off its descendants from $T$, we have $T'$ which is called a \textit{closed tree} of $S$. If the number of nodes in $T'$ is finite, $T'$ is called a finite closed semantic tree.

\begin{lemma}
\label{lemma:inferencenode}
There always exists an inference node on finite closed tree.
\end{lemma}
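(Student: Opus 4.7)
My plan is to exploit the finiteness of the closed semantic tree $T'$ together with the fact that, by construction, every leaf of $T'$ is a failure node (since the only way $T'$ is obtained from $T$ is by cutting off descendants of failure nodes, and every branch of $T$ in $T'$ contains a failure node). The strategy is to locate a failure node at maximum depth and then show that its sibling in $T'$ is necessarily also a failure node, which forces their common parent to be an inference node.

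First, I would observe that since $T'$ is finite, the set of depths of its leaves is finite and hence attains a maximum. Pick any leaf $N$ realizing this maximum depth $d$. Because $N$ is a leaf of $T'$, and $T'$ is closed, $N$ must be a failure node. Let $P$ be the parent of $N$. Note that $P$ itself cannot be a failure node: if it were, then by the definition of closure we would have cut off all descendants of $P$, and so $N$ would not appear in $T'$ at all. In particular, $P$ has two immediate descendants in $T$, and the sibling $N'$ of $N$ is still present in $T'$.

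Next, I would argue that $N'$ must also be a leaf of $T'$. If $N'$ were internal in $T'$, then it would have at least one descendant in $T'$ at depth at least $d+1$, contradicting the choice of $d$ as the maximum depth of a leaf. Therefore $N'$ is a leaf of $T'$, and hence a failure node as well. Since both immediate descendants $N$ and $N'$ of $P$ are failure nodes, $P$ satisfies the definition of an inference node, completing the argument.

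The proof is essentially routine once the right node is chosen; the only subtlety to justify carefully is that $P$ itself is not a failure node (otherwise $P$ would have no descendants in the closed tree, making $N$ absent), and that maximality of the depth of $N$ really does prevent $N'$ from having any further descendants in $T'$. Neither step should be a real obstacle, so I expect the main concern in writing this up to be formal rather than conceptual, namely stating precisely how the closure operation interacts with the binary branching structure.
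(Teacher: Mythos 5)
Your proposal is correct and follows essentially the same argument as the paper: take a deepest leaf of the finite closed tree, note its parent cannot be a failure node (else the leaf would have been pruned), and use maximality of the depth to force the sibling to be a leaf, hence a failure node, so the parent is an inference node. No substantive difference from the paper's proof.
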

\begin{proof}
Assume that we have a closed tree $CT$. Because $CT$  has finite level, so there exists at least one leaf node $j$ on $CT$  at the highest level. Let $i$ be parent node of $j$.  By definition of closed tree, $i$ cannot be failure node. Therefore, $i$ has another child node, named $k$. If $k$ is a failure node then $i$ is inference node, the lemma is proved. If $k$ is not a failure node then it has two child nodes: $l, m$. Clearly $l, m$ are at higher level than $j$. This contradicts with the assumption that $j$ is at the highest level. Therefore $k$ is a failure node and $i$ is an inference node. The lemma is proved.

\begin{figure}
\label{fig:inferencenode}

\centering
\begin{tikzpicture}[-, main node/.style={circle,fill=white!25,draw,node distance=2.5cm},minimum width=3pt]

  \node[main node] (1) {i};
  \node[main node] (2) [below left of=1] {k};
  \node[main node] (3) [below right of=1] {j};
  \node[main node] (4) [below left of=2] {l};
  \node[main node] (5) [below right of=2] {m};

  \path[every node/.style={font=\sffamily\small}]
  	(1) edge node [left] {} (2)
  		edge node [right] {} (3)
  	(2) edge node [left] {} (4)
  		edge node [right] {} (5)
	;

\end{tikzpicture}
\caption{Proof of inference node}
		\end{figure}
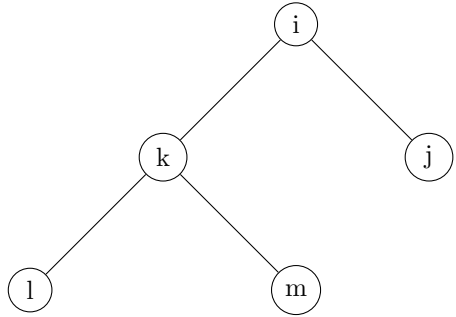
\end{proof}
\begin{lemma}
\label{theo:HerbrandI}
Let $S$ be a clause set. Then $S$ is unsatisfiable iff for every semantic tree of $S$, there exists a finite closed tree.
\end{lemma}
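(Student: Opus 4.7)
The plan is to prove the two directions separately, relying on Theorem \ref{theo:Hunsatis} to translate between unsatisfiability of $S$ and the behavior of $S$ on H-interpretations, and interpreting branches of a semantic tree as (partial) H-interpretations.

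For the ($\Rightarrow$) direction, I assume $S$ is unsatisfiable and fix an arbitrary semantic tree $T$ for $S$. By Theorem \ref{theo:Hunsatis}, every H-interpretation $\mathcal{I}$ falsifies $S$, so there is some ground instance $C'$ of a clause $C \in S$ with $\mathcal{I}(C') < \mathsf{W}$. Since $C'$ mentions only finitely many atoms from $A(S)$, the truth values assigned by $\mathcal{I}$ to those finitely many atoms already suffice to falsify $C'$; these assignments are read off from a finite prefix of the branch determined by $\mathcal{I}$. Hence every branch of $T$ contains a failure node of $S$. Cutting off the descendants of each failure node yields the closed tree $T'$. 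The tree $T'$ is binary, and by construction every branch terminates at a failure node, so $T'$ has no infinite branch; by König's lemma $T'$ is finite. This gives the required finite closed tree.

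For the ($\Leftarrow$) direction, I suppose that every semantic tree of $S$ admits a finite closed tree $T'$, and I aim to show $S$ is unsatisfiable. By Theorem \ref{theo:Hunsatis}, it suffices to show that every H-interpretation $\mathcal{I}$ of $S$ falsifies $S$. Fix such an $\mathcal{I}$; it corresponds to a unique infinite branch $B$ through any semantic tree $T$ for $S$. Applied to a tree whose closed truncation is $T'$, the branch $B$ must meet $T'$ in a failure node $N$, because the descendants in $T$ were cut precisely at failure nodes covering all branches. By the definition of failure node, $\mathcal{I}(N)$ falsifies some ground instance of a clause of $S$, hence $\mathcal{I}(S) < \mathsf{W}$. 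Since $\mathcal{I}$ was arbitrary, $S$ is false under every H-interpretation, so by Theorem \ref{theo:Hunsatis} $S$ is unsatisfiable.

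The main conceptual step is the ($\Rightarrow$) direction, and within it the passage from \emph{every branch contains a failure node} to \emph{the truncated tree is finite}. This is exactly König's lemma on the binary semantic tree, so the real content is checking the finite-support argument: that a falsified ground clause involves only finitely many atoms and thus its falsification is witnessed at some finite level of the branch. The ($\Leftarrow$) direction is routine once one identifies branches of $T$ with H-interpretations and uses the definition of a failure node in the closed tree.
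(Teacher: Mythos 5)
Your proposal is correct and follows essentially the same route as the paper: in the ($\Rightarrow$) direction both arguments identify each branch with an H-interpretation, use the finiteness of the falsified ground clause to locate a failure node on every branch, and then invoke K\"onig's lemma on the truncated binary tree; in the ($\Leftarrow$) direction both read off from the closed tree that every H-interpretation falsifies $S$. Your version is merely more explicit in citing Theorem \ref{theo:Hunsatis}, which the paper's proof uses implicitly.
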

\begin{proof}
$(\Rightarrow)$ Suppose $S$ is unsatisfiable and $T$ is a semantic tree of $S$. For each branch $B$ of $T$, let $\mathcal{I}_B$ be the set of all literals labeled to all edges of the branch $B$ then $\mathcal{I}_B$ is an H-interpretation for $S$. Since $S$ is unsatisfiable, $\mathcal{I}_B$ must falsify a ground instance $C'$ of a clause $C$ in $S$. However, since $C'$ is finite, there must exists a failure node $N_B$ on the branch $B$. Since every branch of $T$ has a failure node, there is a closed semantic tree $T'$ for $S$. Furthermore, since only a finite number of edges are connected to each node of $T'$, the number of nodes in $T'$ must be finite, for otherwise, by K$\ddot{o}$nig Lemma, we could find an infinite branch containing no failure node. Thus, $T'$ is a finite closed tree.

$(\Leftarrow)$ Conversely, if corresponding to every semantic tree $T$ for $S$ there is a finite closed semantic tree, by the definition of closed tree, every branch of $T$ contains a failure node. This means that every interpretation falsifies $S$. Hence $S$ is unsatisfiable.
\end{proof}
In the next section we present the inference based on resolution rule for our linguistic logic. Lemma \ref{lemma:inferencenode} and Lemma \ref{theo:HerbrandI} will be used to prove the soundness and completeness of resolution inference rule.

\section{Resolution}
\label{sec:resolution}

\hspace*{0.3cm} 
In two-valued logic, when we have a set of formulae $\{ A, \neg A \}$ (written as $\{ A^\mathsf{True}, A^\mathsf{False} \}$ in our logic) then the set is said to be contradictory. However in our logic, the degree of contradiction can vary because the truth domain contains more than two elements. Let us consider two sets of formulae $\{ A^\mathsf{VeryTrue}, A^\mathsf{VeryFalse} \}$ and $\{ A^\mathsf{LessTrue}, A^\mathsf{LessFalse} \}$. Then the first set of formulae is ``more contradictory'' than the second one. Consequently, the notion of reliability is introduced to capture the approximation of linguistic inference.
 
\begin{definition}
Let $\alpha$ be an element of $X$ such that $\alpha > \mathsf{W}$ and $C$ be a clause. The clause $C$ with a reliability $\alpha$ is denoted by the pair $(C,\alpha)$.
\end{definition}

The reliability $\alpha$ of a clause set $S=\{ C_1,C_2,\ldots,C_n\}$ is defined as follows:
$\alpha = \alpha_1 \land \alpha_2 \land \ldots \land \alpha_n, $
where $\alpha_i$ is the reliability of $C_i$ $(i=1,2,\ldots,n)$.

\label{defi:variant}
A clause $(C_2,\alpha_2)$ is a variant of a clause $(C_1,\alpha_1)$ if $\alpha_1 \neq \alpha_2$ or $C_2$ is equal to $C_1$ except for possibly different variable name.

\subsection{Fuzzy linguistic resolution}

\label{defi:factor}
The clause $C_2$ is a factor of clause $C_1$ iff $C_2 = C_1 \sigma$, where $\sigma$ is a most general unifier (m.g.u, for short) of some subset $\{L_1,\ldots,L_k\}$ of $C_1$.

\begin{definition}
\label{defi:luathopgiai}
Given two clauses $(C_1,\alpha_1)$ and $(C_2,\alpha_2)$ without common variables, where $C_1 = A^a \lor C'_1$, $C_2 = A^a \lor C'_2$.
Define the linguistic resolution rule as follows:
$$\frac{(A^a \lor C'_1,\alpha_1) \hspace{0.5cm} (B^b \lor C'_2,\alpha_2)}{(C'_1 \gamma \lor C'_2 \gamma,\alpha_3)}$$
where a, b, and $\alpha_3$ satisfy the following conditions:
\begin{displaymath}
\left \{
\begin{array}{l}
a \land b < \mathsf{W} ,\\
a \lor b \geq \mathsf{W} ,\\
\gamma \text{ is an m.g.u of } A \text{ and } B,\\
\alpha_3 = f(\alpha_1, \alpha_2, a, b),
\end{array}
\right.
\end{displaymath}
with $f$ is a function ensuring that $\alpha_3 \leq \alpha_1$, and $\alpha_3 \leq \alpha_2$.
\\$(C'_1 \gamma \lor C'_2 \gamma,\alpha_3)$ is a binary resolvent of $(C_1,\alpha_1) \text{ and } (C_2,\alpha_2)$. The literals $A^a$ and $B^b$ are called literals resolved upon.
\end{definition}

In Def. \ref{defi:luathopgiai}, $\alpha_3$ is defined so as to be smaller or equal to both $\alpha_1$ and $\alpha_2$. In fact, the obtained clause is less reliable than original clauses. The function $f$ is defined as following:
\begin{align}
\label{formula:combine-reliability}
\alpha_3 = f(\alpha_1, \alpha_2, a, b) = \alpha_1 \land \alpha_2 \land (\neg(a \land b)) \land (a\lor b)
\end{align}
Obviously, $\alpha_1,\alpha_2 \ge \mathsf{W}$, and $\alpha_3$ depends on $a, b$. Additionally, $a \land b < \mathsf{W}$ implies $\neg (a \land b) >\mathsf{W}$. Moreover, $(a \lor b) \geq \mathsf{W}$.  Then, by Formula (\ref{formula:combine-reliability}), we have $\alpha_3 \geq \mathsf{W}$.

An inference is sound if its conclusion is a logical consequence of its premises. That is, for any interpretation I, if the truth values of all premises are greater than $\mathsf{W}$, the truth value of the conclusion must be greater than $\mathsf{W}$.

\begin{definition}
\label{defi:resolvent}
A resolvent of clauses $C_1$ and $C_2$ is a binary resolvent of factors of $C_1$ and $C_2$, respectively.
\end{definition}

\begin{definition}
\label{defi:derivable}
Let $S$ be a clause set. A resolution derivation is a sequence of the form $S_0,\ldots,S_i, \ldots$, where
\begin{itemize}
\item $S_0 = S$, and
\item $S_{i+1} = S_i \cup \{(C,\alpha)\}$, where $(C,\alpha)$ is the conclusion of a resolution inference with premises $S_i$ based on resolution rule in Def. \ref{defi:luathopgiai} and $(C,\alpha) \notin S_i$.
\end{itemize}
\end{definition}

\begin{lemma}[Lifting lemma]
\label{lemma:Lifting}
If $C_1'$ and $C_2'$ are instances of $C_1$ and $C_2$, respectively, and if $C'$ is a resolvent of $C_1'$ and $C_2'$, then there is a resolvent $C$ of $C_1$ and $C_2$ such that $C'$ is an instance of $C$.
\end{lemma}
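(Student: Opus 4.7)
The plan is to adapt the classical first-order lifting argument to our linguistic setting, taking care of the logical-constant annotations $a,b$ attached to literals and of the reliability value computed by the function $f$ in Def.~\ref{defi:luathopgiai}. The key observation I will exploit is that a substitution acts on terms inside predicates but does not touch the annotations, so the side conditions ($a\wedge b<\mathsf{W}$ and $a\vee b\geq \mathsf{W}$) and the reliability $f(\alpha_1,\alpha_2,a,b)$ transfer transparently between the ground-level resolvent $C'$ and the lifted resolvent $C$.

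First I would rename variables so that $C_1$ and $C_2$ share none, and write $C_1'=C_1\theta_1$, $C_2'=C_2\theta_2$. Since $C'$ is a resolvent of $C_1'$ and $C_2'$, Def.~\ref{defi:resolvent} together with Def.~\ref{defi:factor} gives factors $C_1'\sigma_1'$ and $C_2'\sigma_2'$, formed via an m.g.u.\ $\sigma_1'$ of some subset of literals of $C_1'$ and an m.g.u.\ $\sigma_2'$ of some subset of literals of $C_2'$; these factors are then resolved on literals $A'^a$ and $B'^b$ through an m.g.u.\ $\gamma'$ of $A'$ and $B'$, under the side conditions above.

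Second I would lift each of the three unifications in turn. The literals of $C_1'$ that $\sigma_1'$ unifies are the images under $\theta_1$ of literals $L_1^{a},\ldots,L_k^{a}$ of $C_1$ sharing the annotation $a$ (annotations are invariant under $\theta_1$); since $\theta_1\sigma_1'$ is a unifier of $\{L_1,\ldots,L_k\}$, the m.g.u.\ property yields an m.g.u.\ $\sigma_1$ of that set, and analogously an m.g.u.\ $\sigma_2$ for the preimage subset in $C_2$. Then $C_1\sigma_1$ and $C_2\sigma_2$ are factors of $C_1$ and $C_2$ containing literals $A^a$ and $B^b$ whose images by a suitable substitution are $A'^a$ and $B'^b$; hence $A$ and $B$ admit an m.g.u.\ $\gamma$, and Def.~\ref{defi:luathopgiai} applied to these factors produces the desired resolvent $C$ with reliability $f(\alpha_1,\alpha_2,a,b)$, which by annotation-invariance is exactly the reliability carried by $C'$.

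The main obstacle I anticipate is assembling the three lifted m.g.u.'s into a single witnessing substitution $\lambda$ with $C\lambda=C'$. The m.g.u.\ property guarantees factorisations $\sigma_1\lambda_1=\theta_1\sigma_1'$, $\sigma_2\lambda_2=\theta_2\sigma_2'$, and a further factorisation of the pulled-back unifier of $A$ and $B$ through $\gamma$; one must verify that, thanks to the disjointness of the variables of $C_1$ and $C_2$, these partial factorisations can be glued into one substitution without conflict. Once that bookkeeping step is discharged, $C\lambda$ agrees with $C'$ literal by literal, the annotations are preserved throughout, and the reliability matches by the same identity, which proves the lemma.
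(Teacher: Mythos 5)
Your proposal is correct and follows the same core strategy as the paper's own proof: express the ground-level unifier as a composition through the instance substitution, then invoke the maximal generality of the m.g.u.\ to factor everything through a lifted unifier $\omega$ (your $\gamma$), so that $C'$ becomes an instance of $C$. The difference is one of completeness rather than of route. The paper's proof writes $C_1'=C_1\sigma$ and $C_2'=C_2\sigma$ with a single substitution and a single literal resolved upon in each clause, and never mentions factors, even though Definition~\ref{defi:resolvent} defines a resolvent as a binary resolvent of \emph{factors}; your version lifts the two factoring m.g.u.'s as well as the resolution m.g.u., which is what the classical Chang--Lee argument actually requires, and you also make explicit that the annotations $a,b$ --- and hence the side conditions $a\wedge b<\mathsf{W}$, $a\vee b\geq\mathsf{W}$ and the reliability $f(\alpha_1,\alpha_2,a,b)$ --- are untouched by substitution, a point the paper leaves implicit. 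The gluing step you flag as the main obstacle is indeed the only remaining bookkeeping, and it discharges exactly as you say: because $C_1$ and $C_2$ share no variables (and the lifted factoring m.g.u.'s can be chosen to introduce none in common), the partial factorisations combine into a single substitution $\lambda$ with $C\lambda=C'$. So your argument is sound and, if anything, repairs an omission in the paper's own proof.
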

\begin{proof}
Let $C_1=A^a \lor C_1'$ and $C_2=B^b \lor C_2'$.

$C'_1 = {\Gamma'_1}^\alpha \lor {T'_1}^{\beta_1},
 C'_2 = {\Gamma'_2}^\delta \lor {T'_2}^{\beta_2}$ 
($\beta_1 \land \beta_2 < \mathsf{W}, \beta_1 \lor \beta_2 > \mathsf{W}$), 
$\gamma$ is a m.g.u of $T'_1,T'_2$. 
$\sigma$ is an assignment.
 
$C'_1 = C_1\sigma, C'_2 = C_2\sigma$ where $C_1 = {\Gamma_1}^\alpha \lor {T_1}^{\beta_1}, C_2 = {\Gamma_2}^\delta \lor {T_2}^{\beta_2}$. By resolution rule \ref{defi:luathopgiai}, $C' = \gamma o \sigma ({\Gamma'_1}^\alpha \lor {\Gamma'_2}^\delta) = \gamma o \sigma ({\Gamma_1}^\alpha \lor {\Gamma_2}^\delta)$ because of $\Gamma'_1 = \Gamma_1 \sigma, \Gamma'_2 = \Gamma_2 \sigma$. Assume $\omega$ is a m.g.u of $T_1,T_2$ then $\omega$ is more general then $\gamma$, implying $\omega$ is more general $\gamma o \sigma$. Hence, $C' = \gamma o \sigma ({\Gamma_1}^\alpha \lor {\Gamma_2}^\delta)$ is an instance of $C = \omega({\Gamma_1}^\alpha \lor {\Gamma_2}^\delta)$. The lemma is proved.
\end{proof}

We find that resolution derivation $S_0,\ldots,S_i, \ldots$ is infinite because the set of assignments and the set of semantic values are infinite. However, if the original clause set $S$ is unsatisfiable, the sequence $S_i$ always derives an empty clause $\Box$. The soundness and completeness of resolution derivation is shown by the following theorem: 
\begin{theorem}
\label{theo:completeness}
Let $S$ be a clause set, $S_0, \ldots, S_i, \ldots$ be a resolution derivation. $S$ is unsatisfiable iff there exists $S_i$ containing the empty clause $\Box$.
\end{theorem}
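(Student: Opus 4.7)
The plan is to prove the two directions separately. The ($\Leftarrow$) direction is soundness: if some $S_i$ contains $\Box$, then $S$ is unsatisfiable. The ($\Rightarrow$) direction is completeness: if $S$ is unsatisfiable, the derivation eventually produces $\Box$. Completeness is the harder direction and will be handled via the Herbrand-style machinery (Theorem \ref{theo:Hunsatis}, Lemma \ref{theo:HerbrandI}, Lemma \ref{lemma:inferencenode}) together with the Lifting Lemma (Lemma \ref{lemma:Lifting}).

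For soundness, I would first establish that a single application of the linguistic resolution rule of Def.\ \ref{defi:luathopgiai} is sound, i.e.\ the resolvent is a logical consequence of its two premises. Let $I$ be any interpretation that satisfies $(A^a \lor C_1', \alpha_1)$ and $(B^b \lor C_2', \alpha_2)$, and let $\gamma$ be an m.g.u.\ of $A$ and $B$, so that $I(A\gamma)=I(B\gamma)$. A case analysis on the four branches of the semantic definition of $I(P^{\alpha_2})$, combined with the conditions $a \land b < \mathsf{W}$ and $a \lor b \geq \mathsf{W}$, forces that at least one of $I(A^a\gamma)$, $I(B^b\gamma)$ is strictly below $\mathsf{W}$; hence at least one of $C_1'\gamma$, $C_2'\gamma$ must already evaluate $\geq \mathsf{W}$, so the resolvent is satisfied. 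The reliability bound $\alpha_3 \geq \mathsf{W}$ is already argued in the paragraph following Formula (\ref{formula:combine-reliability}). A routine induction on $i$ then shows that every clause in $S_i$ is a logical consequence of $S$; since $I(\Box) < \mathsf{W}$ under every interpretation, the presence of $\Box$ in some $S_i$ forces $S$ to be unsatisfiable.

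For completeness, suppose $S$ is unsatisfiable. By Theorem \ref{theo:Hunsatis} and Lemma \ref{theo:HerbrandI}, every semantic tree of $S$ admits a finite closed subtree $T$. I handle the ground case first, by induction on the number of nodes of $T$. By Lemma \ref{lemma:inferencenode}, $T$ has an inference node $N$ whose two children are failure nodes; the partial H-interpretations at those children falsify two ground instances, which must share an atom $A$ (otherwise the failure would already occur at $N$), giving clauses of the shape $A^a \lor C_1'$ and $A^b \lor C_2'$ where a close reading of the four-case literal semantics shows the annotations satisfy $a \land b < \mathsf{W}$ and $a \lor b \geq \mathsf{W}$. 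The ground resolvent $C_1' \lor C_2'$ then fails already at $N$, producing a strictly smaller closed tree for $S$ extended by this resolvent, and the induction hypothesis yields a ground derivation of $\Box$ (with the base case $|T|=1$ corresponding to $\Box \in S$). The Lifting Lemma \ref{lemma:Lifting} then replaces each ground resolution step by a corresponding non-ground step on the original clauses of $S$, so some $S_i$ indeed contains $\Box$.

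The main obstacle will be justifying precisely that the two ground instances falsified at the children of an inference node must share an atom whose annotations satisfy the two linguistic conditions $a \land b < \mathsf{W}$ and $a \lor b \geq \mathsf{W}$. This rests on a careful analysis of the four cases in the semantics of $I(P^{\alpha_2})$: on the left edge of $N$ the atom $A$ is labeled $< \mathsf{W}$, on the right edge $\geq \mathsf{W}$, and for a literal $A^a$ to flip from possibly-satisfied at $N$ to falsified at one descendant, the annotation $a$ is pinned to a specific side of $\mathsf{W}$; tracking both descendants simultaneously yields the two required conditions. Once this ground complementary-literal lemma is in place, the remaining tree induction and the invocation of Lemma \ref{lemma:Lifting} are routine, and the reliability bookkeeping is absorbed by Formula (\ref{formula:combine-reliability}) as already verified in the paper.
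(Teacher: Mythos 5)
Your proposal follows essentially the same route as the paper: soundness by showing a single resolution step preserves models and inducting along the derivation, and completeness via a finite closed semantic tree (Lemma \ref{theo:HerbrandI}), an inference node (Lemma \ref{lemma:inferencenode}), a ground resolvent that fails at that node yielding a strictly smaller closed tree, and the Lifting Lemma to pull the ground refutation up to $S$. The only difference is that you flag explicitly (and rightly) the step the paper leaves implicit, namely that the two ground instances falsified at the children of the inference node share the atom labeling that node and that its annotations satisfy $a \land b < \mathsf{W}$ and $a \lor b \geq \mathsf{W}$; otherwise the argument is the same.
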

\begin{proof}
$(\Rightarrow)$ Suppose $S$ is unsatisfiable. Let $A=\{A_1, A_2,\ldots\}$ be the atom set of $S$. Let $T$ be a semantic tree for $S$. By Theo. \ref{theo:HerbrandI}, $T$ has a finite closed semantic tree $T'$. 

If $T'$ consists of only one root node, then $\Box$ must be in $S$ because no other clauses are falsified at the root of a semantic tree. 
Thus the theorem is true.

Assume $T'$ consists of more than one node, by Lemm. \ref{lemma:inferencenode} $T'$ has at least one inference node. Let $N$ be an inference node in $T'$, and let $N_1$ and $N_2$ be the failure nodes immediately below $N$. 

Since $N_1$ and $N_2$ are failure nodes but $N$ is not a failure node, there must exist two ground instances $C_1'$ and $C_2'$ of clauses $C_1$ and $C_2$ such that $C_1'$ and $C_2'$ are false in $\mathcal{I}(N_1)$ and $\mathcal{I}(N_2)$, respectively, but both $C_1'$ and $C_2'$ are not falsified by $\mathcal{I}(N)$. Therefore, $C_1'$ must contain a literal $A^a$ and $C_2'$ must contain a literal $B^b$ such that $\mathcal{I}(A^a) < \mathsf{W}$ and $\mathcal{I}(B^b) \geq \mathsf{W}$.

Let $C' = (C_1' - A^a) \lor (C_2' - B^b)$. $C'$ must be false in $\mathcal{I}(N)$ because both $(C_1' - A^a)$ and $(C_2' - B^b)$ are false. By the Lifting Lemma we can find a resolvent $C$ of $C_1$ and $C_2$ such that $C'$ is a ground instance of $C$.

Let $T''$ be the closed semantic tree for $(S \cup \{C\})$ obtained from $T'$ by deleting any node or edge that is below the first node where the resolvent $C'$ is falsified. Clearly, the number of nodes in $T''$ is fewer than that in $T'$. Applying the above process on $T''$, we can obtain another resolvent of clauses in $(S \cup \{C\})$. Putting this resolvent into $(S \cup \{C\})$ we can get another smaller closed semantic tree. This process is repeated until the closed semantic tree consists of only the root node. This is possible only when $\Box$ is derived, therefore there is a deduction of $\Box$ from $S$.

$(\Leftarrow)$ Suppose there is a deduction of $\Box$ from $S$. Let $R_1, \ldots, R_k$ be the resolvents in the deduction. Assume $S$ is satisfiable then there exists $\mathcal{I} \models S$. If a model satisfies clauses $C_u$ and $C_v$, it must also satisfy any resolvent of $C_u$ and $C_v$. Therefore $\mathcal{I} \models (C_u \land C_v)$.
Since resolution is an inference rule then if $\mathcal{I} \models (C_u \land C_v)$ then $\mathcal{I} \models R_i$ for all resolvents.
However, one of the resolvents is $\Box$ therefore $S$ must be unsatisfiable. The theorem is proved.
\end{proof}

A \emph{resolution proof} of a clause $C$ from a set of clauses $S$ consists of repeated application of the resolution rule to derive the clause $C$ from the set $S$. If $C$ is the empty clause then the proof is called a \emph{resolution refutation}. We shall represent resolution proofs as \emph{resolution trees}. Each tree node is labeled with a clause. There must be a single node that has no child node, labeled with the conclusion clause, we call it is the root node. All nodes with no parent node are labeled with clauses from the initial set $S$. All other nodes must have two parents and are labeled with a clause $C$
such that $$\frac{C_1 \hspace{0.5cm} C_2}{C}$$ where $C_1, C_2$ are the labels of the two parent nodes. If $\mathsf{RT}$ is a resolution tree
representing the proof of a clause with reliability $(C, \alpha)$,
then we say that $\mathsf{RT}$ has the reliability $\alpha$.

\begin{example}
\label{eg:RT}
Let $AX = (X, G, H, \leq, \neg,\lor,\land , \rightarrow)$ be a \textit{lin-HA} where $G=\{\bot,\mathsf{False},$ $\mathsf{W},\mathsf{True},\top\}$,
 $\bot,\mathsf{W},\top$ are the smallest, neutral, biggest elements, respectively, and
 $\bot<\mathsf{False} < \mathsf{W} < \mathsf{True} < \top$; $H^+ =$ $\{\mathsf{V}$,$\mathsf{M}\}$ and $H^- = \{\mathsf{P, L}\}$ (V=Very, M=More, P=Possible, L=Less); 
Consider the clause set after transforming into CNF as following:
\begin{multicols}{2}
\begin{enumerate}
\item $A(x)^\mathsf{MFalse} \lor B(z)^\mathsf{MFalse} \lor C(x)^\mathsf{PTrue}$
\item $C(y)^\mathsf{MFalse} \lor D(y)^\mathsf{VMTrue}$
\item $C(t)^\mathsf{VVTrue} \lor E(t,f(t))^\mathsf{MFalse}$
\columnbreak
\item $E(a, u)^\mathsf{True}$
\item $A(a)^\mathsf{VTrue}$
\item  $B(a)^\mathsf{LTrue}$
\item $D(a)^\mathsf{MFalse}$
\end{enumerate}
\end{multicols}
where a, b are constant symbols; t, x, y, u, z are variables.
At the beginning, each clause is assigned to the highest reliability $\top$. We have two of resolution proofs as follows:

\begin{prooftree}
\label{eg:RT2}
\AxiomC{($A(x)^\mathsf{MFalse} \lor B(z)^\mathsf{MFalse} \lor C(x)^\mathsf{PTrue},\top$)}
\AxiomC{($A(a)^\mathsf{VTrue}, \top$)}
\kernHyps{.2in}\insertBetweenHyps{\hskip-.0in}
\RightLabel{${\scriptstyle [a/x]}$}
\BinaryInfC{$(B(z)^\mathsf{MFalse} \lor C(a)^\mathsf{PTrue}, \mathsf{MTrue})$}

\AxiomC{($B(a)^\mathsf{LTrue}, \top$)}
\insertBetweenHyps{\hskip-.99in}

\RightLabel{${\scriptstyle [a/z]}$}
\BinaryInfC{$(C(a)^\mathsf{PTrue}, \mathsf{LTrue})$}
\AxiomC{($C(y)^\mathsf{MFalse} \lor D(y)^\mathsf{VMTrue}, \top$)}
\insertBetweenHyps{\hskip-.999in}

\RightLabel{${\scriptstyle [a/y]}$}
\BinaryInfC{$(D(a)^\mathsf{VMTrue}, \mathsf{LTrue})$}
\AxiomC{($D(a)^\mathsf{MFalse}, \top$)}
\insertBetweenHyps{\hskip-.9990in}

\BinaryInfC{$(\Box, \mathsf{LTrue})$}
\end{prooftree}

\begin{prooftree}
\label{ex:RT3}
\AxiomC{$(C(y)^\mathsf{MFalse} \lor D(y)^\mathsf{VMTrue},\top)$}
\AxiomC{$(D(a)^\mathsf{MFalse},\top)$}
\kernHyps{.2in}\insertBetweenHyps{\hskip-.0in}
\RightLabel{${\scriptstyle [a/y]}$}
\BinaryInfC{$(C(a)^\mathsf{MFalse},\mathsf{MTrue})$}

\AxiomC{$(C(t)^\mathsf{VVTrue} \lor E(t,f(t))^\mathsf{MFalse},\top)$}
\insertBetweenHyps{\hskip-.99in}

\RightLabel{${\scriptstyle [a/t]}$}
\BinaryInfC{$(E(a,f(a))^\mathsf{MFalse},\mathsf{MTrue})$}
\AxiomC{($E(a,u)^\mathsf{True}, \top$)}
\insertBetweenHyps{\hskip-.999in}

\RightLabel{${\scriptstyle [f(a)/u]}$}
\BinaryInfC{$(\Box, \mathsf{True})$}
\end{prooftree}
\end{example}

\section{Conclusion}
\label{sec:conclusion}
We have presented syntax and semantics of our linguistic first order logic system. We based on linear symmetrical hedge algebra to model the truth value domain. To capture the approximate of inference in nature language, each clause in our logic is associated with a reliability. We introduced an inference rule with a reliability which ensures that the reliability of the inferred clause is less than or equal to those of the premise clauses. Based on the algebraic structure of linear symmetrical hedge algebra, resolution in linguistic first order logic will contribute to automated reasoning on linguistic information. It would be worth investigating how to extend our result to other hedge algebra structures and to other automated reasoning methods. 

\bibliographystyle{plain} 
\bibliography{References}
\end{document}